%% file: main.tex
\documentclass[aps,prl,reprint,amsmath,amssymb,longbibliography,nofootinbib,superscriptaddress]{revtex4-2}
\usepackage[T1]{fontenc}
\usepackage{lmodern}
\usepackage{graphicx}
\usepackage{bm}
\usepackage{mathtools}
\usepackage{microtype}
\usepackage[colorlinks=true,citecolor=blue!55!black,linkcolor=blue!55!black,urlcolor=blue!55!black]{hyperref}
\usepackage{xcolor}

\newtheorem{theorem}{Theorem}
\newcommand{\cH}{\mathcal H}
\newcommand{\cD}{\mathcal D}
\newcommand{\Tr}{\operatorname{Tr}}
\newcommand{\Ran}{\operatorname{Ran}}
\newcommand{\vect}{\operatorname{vec}}

\begin{document}

\title{Topology Obstructs Pure Foundation Neural Quantum States}

\author{Timothy Heightman}
\email{timothyheightman@simulacra-ai.com}
\affiliation{ICFO--Institut de Ci\`encies Fot\`oniques, The Barcelona Institute of Science and Technology, 08860 Castelldefels, Barcelona, Spain}
\affiliation{Simulacra Research Inc., London, UK and Chicago, USA}
\author{Elena Orlova}
\affiliation{Simulacra Research Inc., London, UK and Chicago, USA}
\author{Philip Mantrov}
\affiliation{Simulacra Research Inc., London, UK and Chicago, USA}
\author{Aleksei Ustimenko}
\email{aleksei@simulacra-ai.com}
\affiliation{Simulacra Research Inc., London, UK and Chicago, USA}


\begin{abstract}
Foundation models for ground states in spin-1/2 systems are a promising method for problems ranging from quantum chemistry to identifying new phase diagrams. Nearly all such models are currently pure-states that condition on the Hamiltonian's parameters, whose Monte Carlo samples give energy  estimates according to the variational principle. In this contribution, we show that this representation is topologically obstructed. For any gapped Hamiltonian family whose ground-state bundle is non-trivial, every continuous normalized state-vector model has zero fidelity with the ground state at some parameter value in the Hamiltonian family. For that value, the energy is at least one spectral gap, $\Delta$, with an $\mathcal{O}(\Delta)$ gap in an open-neighbourhood of that point. We show that this is a sufficient no-go also in the case of degenerate ground-state manifolds, time dynamics, and periodic systems with mixed space-time topology, demonstrating these obstructions on one- and two-qubit systems. We discuss how this causes a spike in the fidelity susceptibility, giving a numerical signature of a phase-transition where there is none. We then show that operator-valued models canonically avoid these obstructions and preserve topological information, implying a structural necessity in representation for foundation neural quantum states.

\end{abstract}

\maketitle

\begin{figure*}[t]
\centering
\includegraphics[width=0.90\textwidth]{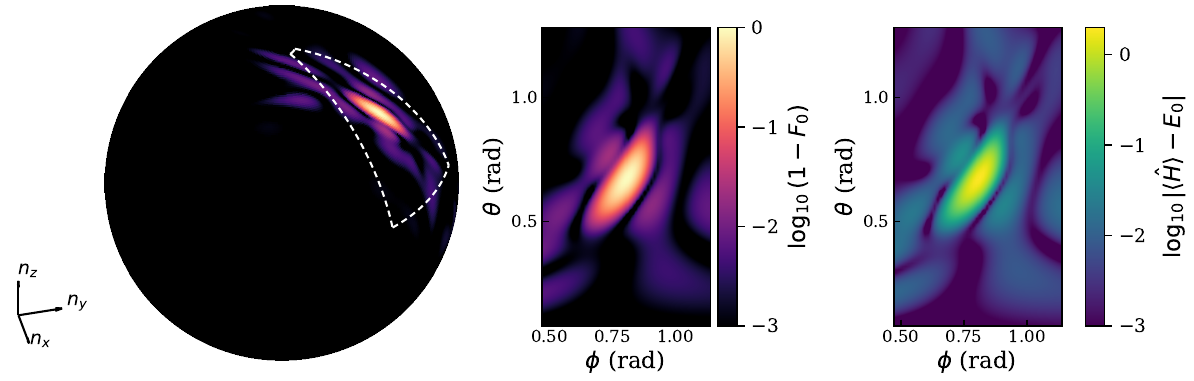}
\caption{\label{fig:static}Fidelity and energy gap to the true ground state of trained wavefunction-valued FNQS. 
The sphere is the Hamiltonian family $\hat H(\bm n)=-\bm n\cdot\hat{\bm\sigma}$, and its heatmap shows $\log_{10}[1-F_0(\bm n)]$ 
after variational training. The white dashed boundary marks the $(\theta,\phi)$ window enlarged in the two right panels, which show $\log_{10}(1-F_0)$ and $\log_{10}|\langle\hat H\rangle-E_0|$, respectively.  
At the maximum $x_*$, $F_0=0$ and the energy error is $2$, one full spectral gap, with an $\mathcal{O}(\Delta)$ gap persisting in the neighbourhood 
of $x_*$, showing this obstruction has a non-zero measure in the total Hamiltonian space.  Architecture and training details are given in the End Matter.}
\end{figure*}

\paragraph{Introduction.---}
Finding the ground state of a quantum many-body Hamiltonian is a central computational problem for quantum technology. It determines electronic structure in quantum chemistry, encodes combinatorial optima, and, across a parameterised family of Hamiltonians, maps the phase diagram in which new many-body physics appears~\cite{bauer2020quantum,mcardle2020quantum,du2013handbook,gu2010fidelity}.  
Exponentially large Hilbert spaces have motivated several complementary strategies for finding ground states, from variational quantum algorithms to tensor networks and neural quantum states (NQS)~\cite{carleo2017solving,lange2024architectures}. The latter has become a leading classical approach, since it is not bound by the area-law constraints of tensor networks~\cite{hastings2007arealaw,cirac2021peps_review}, nor does it natively suffer from the barren plateau phenomenon~\cite{cerezo2021variational,mcclean2018barren}.

Conventionally, the training cost for NQS requires optimizing an architecture from scratch for each given instance of a many-body system, including in electronic NQS such as FermiNet, PauliNet, and DeepErwin~\cite{pfau2020ferminet,hermann2020pauli,gerard2022goldstandard}. Foundation neural quantum states (FNQS)~\cite{zhang2023transformer,rende2024finetuning,viteritti2025spinglass,zaklama2025attention} changed this paradigm, amortizing that cost with one model trained across a Hamiltonian family such as a family of Ising models with differing interaction and transverse field strengths \cite{Rende2025}. So far these models have predominantly been wavefunction-valued. That is, for basis bit strings $b$, a neural network returns $c_\theta(\hat H,b)\in\mathbb C$ and hence $|\psi_\theta(\hat H)\rangle=\sum_b c_\theta(\hat H,b)|b\rangle$, usually accessed by variational Monte Carlo sampling. Because the network is a continuous function of the Hamiltonian parameters, such a model is a continuous map from the parameter space to nonzero state vectors. This is irrespective of whether a model outputs amplitudes directly, or log-amplitudes and phases. The central aim of FNQS is to maintain a positive overlap with the true ground eigenstate over the entire family, with a performant model having as large an overlap with the true ground state for every Hamiltonian in the family.

In this contribution, we show that this goal is unattainable for topologically nontrivial gapped Hamiltonian families. Regardless of model capacity, architecture, loss function, or optimization procedure, every continuous wavefunction-valued foundation model must produce a state that is exactly orthogonal to the ground space at some Hamiltonian in the family, giving zero ground-state fidelity and an energy error of at least one spectral gap. Because the fidelity is continuous, this obstruction occupies a finite open region of the parameter space, and its location is fixed by training dynamics. The obstruction arises whenever the family encloses a degeneracy carrying nonzero Berry phase, which is the case for many systems of interest such as rotating fields, boundary twists, momentum or time cycles~\cite{Thouless1983,NiuThoulessWu1985,Rudner2013}. At the zero-fidelity node it produces a spike in the fidelity susceptibility. We establish this result for both unique and degenerate ground states, and show that it persists under time-dynamics, including driven and periodic systems with mixed space--time topology. Finally, we prove that operator-valued foundation models evade the obstruction and preserve the family's physical topological information, including in dynamical settings.

Parameter-dependent Hamiltonian eigenspaces have long been understood geometrically. Berry and Simon formulated adiabatic eigenstates as vector bundles with connection and holonomy~\cite{Simon1983,Berry1984}. Related obstructions govern smooth periodic Bloch frames in Wannier theory~\cite{Panati2007,Brouder2007}, while Thouless identified mixed momentum--time topology in quantized pumping~\cite{Thouless1983}. These results concern the nonexistence of a global eigenvector gauge or frame.

Here, an FNQS poses a weaker approximation problem that is obstructed by the same topology. Its output need not be an exact ground eigenvector, only a normalized state with nonzero ground-space overlap. We show that even this overlap must vanish at some parameter value in the Hamiltonian parameter space, where the energy error is at least one spectral gap. The rank-one form of this observation first appeared in the analysis of a recent foundation model~\cite{HamiltonZero2026}, and here we treat it in full.

\begin{figure*}[t]
\centering
\includegraphics[width=\textwidth]{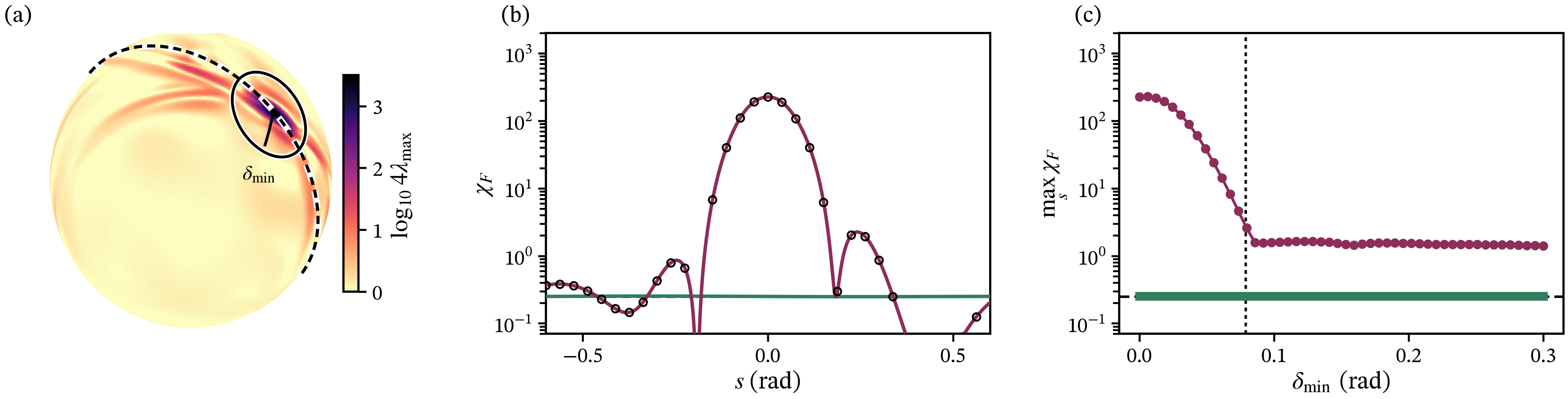}
\caption{\label{fig:supp-chi} Trained-state fidelity susceptibility manufactures a phase-transition signature at the zero-fidelity node of a pure state FNQS. (a) The largest eigenvalue $\lambda_{\text{max}}$ of the model's fidelity metric on the parameter space $X = S^2$. The dashed line shows a great circle that sweeps in the parameter space in (b)-(c). The black ring in (a) shows a geodesic radius $\delta_{\text{min}}=0.3$ about $x_*$, which is the largest closest approach swept in (c).  (b) The Fidelity susceptibility of the model (red) along the great circle in (a) and the analytical value (green), with training points shown as circles. The model's susceptibility spike peaks at $903$ times the exact value of $\chi_{F}=1/4$ along the great circle. (c) Maximum $\chi_F$ values over great circles with closest approach $\delta_{\min}$ to $x_*$, with the dashed line marking the closest approach such that the pure-state model has $50\%$ fidelity to the true ground state. 
The green points show a density-valued model on the same set of paths in $S^2$, with the same number of variational parameters, sitting at $0.2532 \pm 0.0003$, a $1.3\%$ offset of the true value $\chi_F = 1/4$.}
\end{figure*}

\paragraph{Non-degenerate Obstruction.---}

Let $x \in X$ parameterise a continuous family of finite-dimensional Hamiltonians $\hat H(x)$ acting on a Hilbert space $\cH$.  We assume that the ground-state energy $E_0(x)$ has fixed degeneracy $r$ and remains separated from the excited spectrum by a gap $\Delta(x)\geq\Delta_{\min}>0$ throughout the family. Because this gap never closes, the corresponding ground-space projector $\hat P_0(x)$ varies continuously with the Hamiltonian~\cite{Kato1995} (see End Matter). The collection of ground spaces defines the ground-state bundle
\begin{equation}
 E=\Ran \hat P_0\longrightarrow X.
 \label{eq:ground-bundle}
\end{equation}
For a unique ground state, $r=1$, this is the ground-state line bundle $L=E$. Writing any physical state as a density operator $\hat\rho$, the gap gives $\hat H\geq E_0 \hat I+\Delta(\hat I-\hat P_0)$. Tracing against any density operator $\hat\rho$ gives
\begin{equation}
 \Tr(\hat\rho \hat H)-E_0\geq\Delta\Tr[\hat\rho(\hat I-\hat P_0)].
 \label{eq:gapineq}
\end{equation}
Accordingly, a state with no ground-space overlap has an energy error of at least one spectral gap.

A normalized vector model assigns a state $|\Phi(x)\rangle\in\cH$, continuously across $X$, with fidelity to the ground space, $F_0(x)=\langle\Phi(x)|\hat P_0(x)|\Phi(x)\rangle$. If $F_0(x)$ is positive everywhere, then the projected state $\hat P_0(x)|\Phi(x)\rangle$ must be nowhere vanishing over $X$. The following theorem shows that a topologically nontrivial ground-state line bundle makes this requirement impossible to satisfy.

\begin{theorem}
\label{thm:line}
For $r=1$, if the ground-state line bundle $L$ is topologically nontrivial, for example with a nonzero first Chern number, every continuous normalized vector model has a parameter value $x_*$ such that,
\begin{equation}
 \begin{aligned}
 \hat P_0(x_*)|\Phi(x_*)\rangle&=0,\\
 \langle\Phi(x_*)|\hat H(x_*)|\Phi(x_*)\rangle
   &\geq E_0(x_*)+\Delta(x_*).
 \end{aligned}
 \label{eq:linefail}
\end{equation}
Consequently $F_0(x_*)=0$, and the energy error there is at least $\Delta(x_*)\geq\Delta_{\min}$.
\end{theorem}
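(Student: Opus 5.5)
We argue by contradiction, using the projected state $\hat P_0(x)|\Phi(x)\rangle$ as a candidate global section of $L$. Suppose that $\hat P_0(x)|\Phi(x)\rangle\neq 0$ for every $x\in X$. Since $\hat P_0(x)$ varies continuously in operator norm (the gap never closes; see the End Matter and Kato) and $|\Phi(x)\rangle$ is continuous, the map
\begin{equation}
 s(x)=\hat P_0(x)|\Phi(x)\rangle
\end{equation}
is a continuous map $X\to\cH$. By construction $s(x)\in\Ran\hat P_0(x)=L_x$, so $s$ is a continuous section of the ground-state line bundle $L\subset X\times\cH$. Under the assumption it is nowhere vanishing.

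The key step is the standard fact that a complex line bundle admitting a nowhere-vanishing continuous section is trivial. We would make this explicit by defining the map
\begin{equation}
 X\times\mathbb C\to L,\qquad (x,z)\mapsto z\,s(x)/\|s(x)\|,
\end{equation}
and checking that it is a continuous bundle isomorphism. On each fibre it is linear and bijective, because $L_x$ is one-dimensional and $s(x)\neq0$. Its inverse $|v\rangle\mapsto (x,\langle s(x)|v\rangle/\|s(x)\|)$ is manifestly continuous. Hence $L$ is trivial, and in particular its first Chern class, and hence any Chern number, vanishes. This contradicts the hypothesis, so there exists some $x_*$ with $\hat P_0(x_*)|\Phi(x_*)\rangle=0$.

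The energy bound then follows directly from the gap inequality \eqref{eq:gapineq}. Take $\hat\rho=|\Phi(x_*)\rangle\langle\Phi(x_*)|$, which is a valid density operator because the model is normalized. Then
\begin{equation}
 \Tr[\hat\rho(\hat I-\hat P_0)]=1-F_0(x_*)=1,
\end{equation}
and \eqref{eq:gapineq} gives $\langle\Phi|\hat H|\Phi\rangle\geq E_0(x_*)+\Delta(x_*)\geq E_0(x_*)+\Delta_{\min}$. We also have $F_0(x_*)=\|\hat P_0(x_*)\Phi(x_*)\|^2=0$.

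The main obstacle is the continuity and topology bookkeeping rather than any estimate. It has three parts. First, we must confirm that $L$, topologized as a subbundle of the trivial bundle $X\times\cH$, is a genuine locally trivial bundle; this follows from the continuity of $\hat P_0$, via local frames $\hat P_0(x)|\psi_0\rangle$ near each base point. Second, we must confirm that the section $s$ is continuous in that topology. Third, we must be clear that the argument uses only nontriviality of $L$, so that "nonzero first Chern number" is just one sufficient witness. This requires no compactness of $X$ beyond what is needed for Chern numbers to be defined, for example $X$ a closed oriented surface such as $S^2$.
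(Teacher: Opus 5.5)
Your proof is correct and follows the paper's argument: the paper likewise normalizes $\hat P_0(x)|\Phi(x)\rangle$ by $\sqrt{F_0(x)}=\|\hat P_0(x)|\Phi(x)\rangle\|$ to get a global nowhere-vanishing ground-state section, which contradicts the nontriviality of $L$, and then applies Eq.~\eqref{eq:gapineq} for the energy bound. Your explicit trivialization map and continuity bookkeeping fill in details the paper leaves implicit.
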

\emph{Proof.}  If $F_0(x)>0$ everywhere, then
$|\widetilde\psi_0(x)\rangle=\hat P_0(x)|\Phi(x)\rangle/\sqrt{F_0(x)}$
would define a continuous normalized ground state over all of $X$, contradicting the nontriviality of $L$.  Hence $F_0(x_*)=0$ somewhere, and Eq.~\eqref{eq:gapineq} gives Eq.~\eqref{eq:linefail}. $\square$

Hence every $x \in X$ separately has a normalized ground state $|\Phi(x)\rangle$, yet no single continuous vector output can maintain even an arbitrarily small positive ground-state fidelity everywhere.
At a zero-fidelity node, the model state lies entirely in the excited subspace and its energy is at least one spectral gap above the ground state.
Furthermore, since the Fidelity function is continuous, there will always be an open neighbourhood around $x_*$ for which the energy is $\mathcal{O}(\Delta)$. 

We can see this obstruction clearly on a simple two-level qubit system. Consider the parameter space $X = S^2$ with $\bm n\in S^2$ and the Hamiltonian
with ground-state projector
\begin{equation}
 \hat H(\bm n)=-\bm n\!\cdot\!\hat{\bm\sigma},
 \qquad \hat P_0=(\hat I+\bm n\!\cdot\!\hat{\bm\sigma})/2.
 \label{eq:hopf}
\end{equation}
Normalized one-qubit vectors form $S^3$, meaning a wavefunction-valued FNQS for this family is a continuous map $\Phi:S^2\to S^3$. For the family in Eq.~\eqref{eq:hopf}, let $L_{\bm n}=\Ran\hat P_0(\bm n)$ be the one-dimensional ground space at $\bm n$. The collection $L=\bigsqcup_{\bm n\in S^2}L_{\bm n}\to S^2$ is the Hopf line bundle, whose first Chern number satisfies $|c_1(L)|=1$.  
Because of this non-trivial topology, Theorem~\ref{thm:line} implies that it must return the excited state at some $\bm n_*$, changing the energy from $-1$ to $+1$. Figure~\ref{fig:static} shows this obstruction in a wavefunction-valued FNQS trained directly on the Hamiltonian family in Eq.~\eqref{eq:hopf}.

\begin{figure*}[t]
\centering
\includegraphics[width=0.78\textwidth]{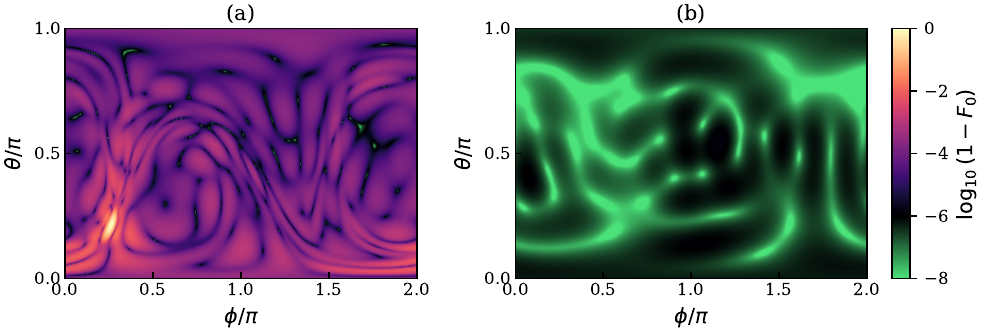}
\caption{\label{fig:density-lift} Matched-capacity density-matrix realization of the operator lift on the one-qubit family of Eq.~\eqref{eq:hopf}. The shared color scale shows $\log_{10}(1-F_0)$ on the same independent $(\theta,\phi)$ grid; a green tail resolves values between $10^{-6}$ and $10^{-8}$. (a) The wavefunction-valued FNQS has a grid maximum $1-F_0=0.996$ and its continuously refined maximum is 1.000. (b) The density-matrix-valued FNQS removes the forced failure: the grid maximum is $1.529\times10^{-6}$ and continuous refinement gives $1.531\times10^{-6}$. The training protocol is matched: both networks use the same Hamiltonian inputs, $3$--$48$--$48$--$3$ architecture, initialization, grid, optimizer, and schedule. Their output parameterisations and independently optimized weights differ.}
\end{figure*}

\paragraph{Degenerate Obstruction.---} Degeneracy might appear to circumvent this obstruction. 
For a unique ground state, positive fidelity means a model must follow a single ground-state direction continuously across a Hamiltonian family.
Within an $r$-fold degenerate ground space, however, the output can rotate among different ground states as the Hamiltonian varies. 
This freedom might allow a continuous ground-state choice even when the full ground-space bundle is nontrivial.
The following theorem however shows that topology obstructs this too.

\begin{theorem}
\label{thm:rankr}
A continuous normalized vector model can satisfy $F_0(x)>0$ for every $x\in X$ if and only if the ground-state bundle $E$ contains a trivial line subbundle. Consequently, if $E$ contains no such line, every continuous normalized vector model has $F_0(x_*)=0$ somewhere, with an energy error of at least one spectral gap. The condition $c_r(E)\neq0$ is a sufficient topological witness.
\end{theorem}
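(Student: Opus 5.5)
The plan is to prove the two directions of the equivalence separately, each by an explicit construction, and then obtain the Chern-class witness from the standard fact that a nowhere-vanishing section kills the top Chern class.

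\emph{Forward direction.} Suppose $F_0(x)>0$ for all $x$. As in the proof of Theorem~\ref{thm:line}, set
\begin{equation}
 |\widetilde\psi_0(x)\rangle=\hat P_0(x)|\Phi(x)\rangle/\sqrt{F_0(x)} .
\end{equation}
This is continuous, because $\hat P_0$ is continuous by the gap assumption, $\Phi$ is continuous, and $F_0>0$. It is normalized and lies in $E_x=\Ran\hat P_0(x)$ for every $x$. Its span $\ell_x=\mathbb C|\widetilde\psi_0(x)\rangle$ is then a line subbundle of $E$. This subbundle is trivialized by the global nonvanishing section $\widetilde\psi_0$ itself, via the map $X\times\mathbb C\to\ell$, $(x,z)\mapsto z|\widetilde\psi_0(x)\rangle$, which is a continuous bijection with continuous inverse $v\mapsto\langle\widetilde\psi_0(x)|v\rangle$.

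\emph{Converse.} Suppose $E$ contains a trivial line subbundle $\ell\subset E$. A trivialization of $\ell$ gives a continuous nowhere-zero section $s(x)\in\ell_x\subset E_x\subset\cH$. Define $|\Phi(x)\rangle=s(x)/\|s(x)\|$. This map is continuous and normalized, and $\hat P_0(x)|\Phi(x)\rangle=|\Phi(x)\rangle$, so $F_0\equiv1>0$. The model is an ordinary continuous vector-valued map $X\to\cH$ because $E$ sits inside the trivial bundle $X\times\cH$.

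\emph{Consequences.} The "consequently" clause is the contrapositive of the forward direction: if $E$ has no trivial line subbundle, then $F_0(x_*)=0$ at some $x_*$. At that point $\Tr[\hat\rho(\hat I-\hat P_0)]=1$ for $\hat\rho=|\Phi\rangle\langle\Phi|$, and Eq.~\eqref{eq:gapineq} gives an energy error of at least $\Delta(x_*)\ge\Delta_{\min}$.

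\emph{Chern witness.} Suppose $E$ contained a trivial line subbundle $\ell$. Since $E\subset X\times\cH$ carries a Hermitian metric, we can split $E\cong\ell\oplus\ell^\perp$, where $\ell^\perp$ is the orthogonal complement of $\ell$ inside $E$ and has rank $r-1$. The Whitney sum formula then gives
\begin{equation}
 c(E)=c(\ell)\,c(\ell^\perp)=c(\ell^\perp),
\end{equation}
because $c(\ell)=1$ for a trivial bundle. Since $c_r(\ell^\perp)=0$ for a rank-$(r-1)$ bundle, this forces $c_r(E)=0$. Equivalently, $c_r(E)$ is the Euler class of the underlying real oriented rank-$2r$ bundle, which vanishes whenever a nowhere-zero section exists. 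Hence $c_r(E)\neq0$ rules out any trivial line subbundle.

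\emph{Main obstacle.} The only step needing care is this last one, namely checking that the orthogonal complement $\ell^\perp$ is a genuine continuous subbundle. I would handle it by noting that the projector onto $\ell^\perp$ is $\hat P_0-|\widetilde\psi_0\rangle\langle\widetilde\psi_0|$. This projector is continuous and has constant rank $r-1$, so its range is a continuous subbundle of $X\times\cH$, by the same Kato-type argument already invoked for $\hat P_0$.
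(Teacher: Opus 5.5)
Your proposal is correct and follows essentially the same route as the paper. You normalize $\hat P_0|\Phi\rangle$ to obtain a trivializing section for the forward direction, normalize a nowhere-zero section of the trivial line for the converse, and use the orthogonal splitting $E\cong\ell\oplus\ell^\perp$ with the Whitney formula for the $c_r(E)\neq0$ witness. Your extra checks only make explicit what the paper leaves implicit: the explicit trivialization map, and continuity of the complement via its projector (where $\widetilde\psi_0$ should be read as any unit section of $\ell$, not necessarily the model-derived one).
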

\emph{Proof.} Assume for contradiction that $E$ contains no trivial line subbundle, but that a continuous normalized model satisfies $F_0(x)>0$ throughout $X$. Then
\[
|\widetilde\psi_0(x)\rangle
=\frac{\hat P_0(x)|\Phi(x)\rangle}{\sqrt{F_0(x)}}
\]
is a continuous normalized ground state. Its span defines a line subbundle $L\subset E$. Because $|\widetilde\psi_0(x)\rangle$ is a global nonvanishing section of $L$, that line bundle is trivial, which is a contradiction.
Conversely, if $E$ contains a trivial line subbundle, that line has a continuous normalized ground-state section. Choosing this section as the model output gives $F_0(x)=1$ throughout $X$. 
Finally, suppose $c_r(E)\neq0$. If $E$ contained a trivial line $L$, its orthogonal complement $E'$ would give
\[
 E\simeq L\oplus E',\qquad \operatorname{rank}E'=r-1.
\]
Since $L$ is trivial,  $c(L)=1$. the Whitney formula~\cite[Sec.~14.4]{MilnorStasheff1974} gives $c(E)=c(E')$, and therefore $c_r(E)=0$, because a rank-$(r-1)$ bundle has no $r$th Chern class. This contradiction shows that $c_r(E)\neq0$ rules out an everywhere-positive fidelity. Equation~(2) then gives the spectral-gap energy bound at the zero-fidelity point. \hfill$\square$

Restricting $X$ to a gapped patch with a trivial ground-state bundle removes this no-go, but also narrows the Hamiltonian family represented by the model. If the model domain contains even one closed gapped subfamily $Y\subset X$ with $c_r(E|_Y)\neq0$, its restriction to $Y$ already has a fidelity zero. 

As an example of the degenerate obstruction, consider two qubits with $X=S^2\times S^2$, with
$\hat Q_\pm=(\hat{I}^{\otimes 2}\pm \hat{I} \otimes \hat \sigma^z_2)/2$
and the Hamiltonian family,
\begin{equation}
 \hat H(\bm n_+,\bm n_-)=-\hat Q_+\,\bm n_+\!\cdot\!\hat{\bm\sigma}_1
                         -\hat Q_-\,\bm n_-\!\cdot\!\hat{\bm\sigma}_1.
 \label{eq:twoqubit}
\end{equation}
The two $\hat{I} \otimes\hat \sigma_2$ sectors each contribute one energy-$-1$ ground state, so $r=2$ and the gap is $2$. Each sector carries a Hopf line bundle $L_\pm$ with first Chern number $|C_1(L_\pm)|=1$ on its sphere. The full rank-two ground space is hence the direct sum~\cite{MilnorStasheff1974}, $E=L_+\oplus L_-$. For a direct sum, the second Chern number is the product of the two first Chern numbers \cite{MilnorStasheff1974},
\begin{equation}
 C_2(E)\equiv\int_{S^2\times S^2}c_2(E) =C_1(L_+)C_1(L_-),
 \label{eq:twoqubit-chern}
\end{equation}
and hence $|C_2(E)|=1\neq0$. Any continuous normalized wavefunction-valued FNQS defined over the full $S^2\times S^2$ Hamiltonian family is therefore 
orthogonal to the two-dimensional ground space at some $(\bm n_+,\bm n_-)$, where its energy is $+1$.

\emph{Fidelity susceptibility of an obstructed model.---}
A standard model-based numerical signature of criticality is the fidelity susceptibility of the \emph{learned} family along a parameter path $\lambda\mapsto x(\lambda) \subseteq X$~\cite{gu2010fidelity},
\begin{equation}
 \chi_F(\lambda)=\lim_{\delta\lambda\to0}\frac{2\left[1-\left|\langle\psi_\theta(\lambda)|\psi_\theta(\lambda+\delta\lambda)\rangle\right|\right]}{\delta\lambda^2}
 \label{eq:supp-chi}
\end{equation}
Let us consider again the one-qubit example above, in which Eq.~\ref{eq:supp-chi} reduces to $\frac14\,|\partial_\lambda \bm b_\theta|^2$ for a Bloch vector $\bm b_\theta$. For this system's Hamiltonian (Eq.~\ref{eq:hopf}), there is no phase transition on the parameter space $S^2$. This is because the Fubini--Study (FS) metric is $\tfrac14(d\theta^2+\sin^2\theta\,d\phi^2)$. Hence along any unit-speed great circle $\chi_F=1/4$ exactly, and the gap is $2$ everywhere. A continuous wavefunction-valued model must nonetheless vanish against the target at some $x_*$ by Theorem~\ref{thm:line}. If we define a radius $\epsilon$ around $x_*$ such that $1-F=1/2$, the length under the FS metric of that radial segment is at least $\pi/4$, so
\begin{equation}
 \sup_{\text{path}\ni x_*}\chi_F\;\ge\;\frac{\pi^2}{16\,\varepsilon^2},
 \label{eq:supp-bound}
\end{equation}
while the exact value is $1/4$. Hence, we see the model \emph{must} have a susceptibility spike, which is a numerical signature of a phase transition, as seen in Fig.~\ref{fig:supp-chi}, despite there being no phase transition in this family.

\paragraph{Operator-valued models are unobstructed.---} When using operator-valued models that assign an operator $\hat A(x)\in\operatorname{End}(\cH)$ for each $x\in X$, the obstruction is lifted. An operator-valued model defines the physical state $\hat\rho_A=\hat A\hat A^\dagger$, with $\Tr\hat\rho_A=1$, and the ground-space fidelity $F_0^{\rm op}=\Tr(\hat\rho_A\hat P_0)=\Tr(\hat A^\dagger\hat P_0\hat A)$. Eq.~\eqref{eq:gapineq} applies to $\hat\rho_A$ unchanged, so $\Tr(\hat\rho_A\hat H)-E_0\geq\Delta(1-F_0^{\rm op})$. To that end recall for any operator $\hat A$ on $\cH$, we may define its vectorization as the vector
\begin{equation}
 |\hat A\rangle\!\rangle\equiv\vect{\hat A}
 =\sum_j|\bar e_j\rangle\otimes \hat A|e_j\rangle
 \in\bar\cH\otimes\cH,
 \label{eq:vecdef}
\end{equation}
where $\{|e_j\rangle\}$ is any orthonormal basis.

\begin{theorem}
\label{thm:lift}
For every continuous gapped Hamiltonian family with fixed ground-state degeneracy $r$, there exists a continuous operator-valued map
\(
\hat A_0:X\longrightarrow\operatorname{End}(\cH)
\)
with unit Hilbert--Schmidt norm such that the fidelity
\(
F_0^{\text{op}}(x)=1,
\)
for all $x\in X$, and
\(
\hat H(x)\hat A_0(x)=E_0(x)\hat A_0(x)
\)
for every $x\in X$.
Its vectorization $|T_0(x)\rangle=|\hat A_0(x)\rangle\!\rangle$ is a continuous normalized vector in $\overline{\cH}\otimes\cH$ satisfying
\begin{equation}
    (\hat I\otimes\hat H(x))|T_0(x)\rangle
    =E_0(x)|T_0(x)\rangle.
\label{eq:lift}
\end{equation}

\end{theorem}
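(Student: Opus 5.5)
The natural candidate is the normalized ground-space projector itself, $\hat A_0(x)=\hat P_0(x)/\sqrt r$. Every property in Theorem~\ref{thm:lift} then follows from facts about $\hat P_0$ that are already available. No choice of basis or section is ever made, which is precisely what sidesteps the obstructions of Theorems~\ref{thm:line} and~\ref{thm:rankr}.

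\textbf{Continuity.} Because the gap never closes, $\hat P_0(x)$ is continuous in $x$ by the Kato/Riesz-projector argument quoted before Eq.~\eqref{eq:ground-bundle}. One can write
\[
\hat P_0(x)=\frac{1}{2\pi i}\oint_\Gamma (z-\hat H(x))^{-1}\,dz
\]
with a contour $\Gamma$ that is locally fixed and separates $E_0$ from the rest of the spectrum. Dividing by the constant $\sqrt r$ preserves continuity. Fixed degeneracy is what makes the normalization constant independent of $x$.

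\textbf{Norm, fidelity, and eigen-equation.} These are one-line checks.
\begin{itemize}
\item Hilbert--Schmidt norm: $\Tr(\hat A_0^\dagger\hat A_0)=\Tr(\hat P_0)/r=1$, using $\hat P_0^\dagger=\hat P_0=\hat P_0^2$ and $\Tr\hat P_0=r$.
\item State and fidelity: $\hat\rho_{A_0}=\hat P_0/r$ has unit trace, and $F_0^{\rm op}=\Tr(\hat P_0\hat P_0)/r=1$.
\item Eigen-equation: $\hat H\hat P_0=E_0\hat P_0$, since $\hat P_0$ projects onto the $E_0$-eigenspace.
\end{itemize}
Combined with Eq.~\eqref{eq:gapineq}, this gives zero energy error everywhere.

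\textbf{Vectorized statement.} Using definition~\eqref{eq:vecdef}, I would check the identity
\[
|\hat B\hat A\rangle\!\rangle=(\hat I\otimes\hat B)|\hat A\rangle\!\rangle,
\]
which holds because $\sum_j|\bar e_j\rangle\otimes\hat B\hat A|e_j\rangle=(\hat I\otimes\hat B)\sum_j|\bar e_j\rangle\otimes\hat A|e_j\rangle$. Taking $\hat B=\hat H(x)$ and applying the operator eigen-equation gives Eq.~\eqref{eq:lift}.

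Vectorization is a linear isometry from Hilbert--Schmidt space onto $\bar\cH\otimes\cH$, so $\| |T_0\rangle \|^2=\Tr(\hat A_0^\dagger\hat A_0)=1$. Because this isometry is fixed and independent of $x$, continuity transfers directly from $\hat A_0$ to $|T_0(x)\rangle$.

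\textbf{Main obstacle.} The only nonroutine point is the continuity of $\hat P_0$ over all of $X$, which may be noncompact. The contour $\Gamma$ can only be fixed locally, so continuity has to be argued in a neighbourhood of each $x$. This uses the uniform gap $\Delta_{\min}$ and the continuity of $\hat H$ to keep $\Gamma$ in the resolvent set, and continuity is a local property, so this suffices.

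I would also remark that this does not contradict Theorem~\ref{thm:rankr}. The vector $|T_0\rangle$ lies in $\Ran(\hat I\otimes\hat P_0)=\bar\cH\otimes E$, and that bundle contains a trivial line subbundle spanned by the canonical section $\hat P_0$, which is the identity element of $\operatorname{End}(E)\cong\bar E\otimes E$.
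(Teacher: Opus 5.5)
Your proposal is correct and follows the paper's proof essentially verbatim: take $\hat A_0=\hat P_0/\sqrt r$, get continuity from the Riesz projector, and get the unit norm and $F_0^{\rm op}=1$ from $\hat P_0^2=\hat P_0$ and $\Tr\hat P_0=r$; the eigen-equation follows from $\hat H\hat P_0=E_0\hat P_0$, and Eq.~\eqref{eq:lift} from vectorization being an isometry that turns left multiplication by $\hat H$ into $\hat I\otimes\hat H$. One minor aside: the zero energy error comes directly from $\Tr(\hat P_0\hat H)/r=E_0$, not from Eq.~\eqref{eq:gapineq}, which only gives a lower bound.
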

\emph{Proof.} Choose
\[
\hat A_0(x)=\frac{\hat P_0(x)}{\sqrt r},
\]
then continuity of $\hat P_0(x)$ makes $\hat A_0(x)$ continuous. Since $\hat P_0^2=\hat P_0$ and $\operatorname{Tr}\hat P_0=r$, we have that
\(
\operatorname{Tr}(\hat A_0^\dagger\hat A_0)=1,
\)
and
\(
F_0^{\text{op}}=1.
\)
The identity $\hat H\hat P_0=E_0\hat P_0$ gives the operator eigenvalue equation. Vectorization preserves the Hilbert--Schmidt inner product and maps $\hat H\hat A_0$ to $(\hat I\otimes\hat H)|\hat A_0\rangle\!\rangle$, which proves the final statement. $\square$

Here, the projector remains globally defined and carries the Berry curvature of the family. Both the pure-state obstruction and Theorem~\ref{thm:lift} also apply in dynamical settings. Replacing $\hat P_0(x)$ by the dynamical projector $\hat Q(x,t)$ gives the corresponding operator-valued target for time evolution, see Theorems~\ref{thm:nodes} and~\ref{thm:floquet} in the End Matter.

As an example, we can choose density matrices as one specific case of operator-valued maps. The exact ground-state density operators are
\begin{equation}
 \cD_0(x)=\{\hat\rho\geq0:\Tr\hat\rho=1,\ \hat\rho=\hat P_0\hat\rho\hat P_0\}
 \subset\operatorname{End}(\cH),
 \label{eq:targets}
\end{equation}
which is convex and contains $\hat A_0 \hat A_o^\dagger = \hat P_0 / r$. For the one-qubit family of Eq.~\eqref{eq:hopf}, we use this density-matrix form, and compare its ability to represent the ground state in Figure~\ref{fig:density-lift}. In general any parameterisation that is operator-valued is unobstructed, such as Ref.~\cite{HamiltonZero2026}.


\paragraph{Discussion.---}
We have shown that pointwise expressivity does not yield global expressivity over a Hamiltonian space. The topological obstruction identified here constrains pure-state FNQS independently of any choice of architecture or sampling scheme. In the End Matter, we show how this obstruction extends to time evolution, including periodic systems. There, the obstruction becomes a zero-fidelity worldline for open time intervals, and a node in the Brillouin zone for periodic time intervals. 

We emphasize that this no-go has non-zero measure in the parameter space by continuity of the Fidelity function. We see this in the striations of Fig.~\ref{fig:static} and the worldlines of Fig.~\ref{fig:density-lift}(a). The size of the striations depends on how sharply a given model can vary. If $|\Phi(x)$ and the exact ground state are Lipschitz in the FS distance with constants $\Lambda$ and $\kappa$ with respect to a metric on $X$, then the triangle inequality gives $F_0(x)<\tfrac12$ whenever $|x-x_*|<\pi/[4(\Lambda+\kappa)]$. Hence a more expressive or better-trained model can shrink the region by becoming sharper. 
We emphasize that this no-go has non-zero measure in the parameter space by continuity of the Fidelity function. We see this in the striations of Fig.~\ref{fig:static} and the worldlines of Fig.~\ref{fig:density-lift}(a). The size of the striations depends on how sharply a given model can vary. If $|\Phi(x) $ and the exact ground state are Lipschitz in the FS distance with constants $\Lambda$ and $\kappa$ with respect to a metric on $X$, then the triangle inequality gives $F_0(x)<\tfrac12$ whenever $|x-x_*|<\pi/[4(\Lambda+\kappa)]$. Hence a more expressive or better-trained model can shrink the region by becoming sharper. 

If a model does shrink $\epsilon$ however, the fidelity susceptibility spike will also increase in size. This is because the fidelity-susceptibility is a response function in the neighbourhood of a point. At $x_*$ if the model's training has shrunk the affected region, but the Fidelity is still zero at $x_*$, then the fidelity susceptibility must increase more rapidly. Hence the better the fit a model has on average to the entire Hamiltonian space (including the region around $x_*$), the larger the spurious spikes in fidelity susceptibility will be at the nodes, despite there being no phase transition there. This is especially important given a central use for FNQS is to search for new phases of matter via the Fidelity susceptibility method \cite{Rende2025, gu2010fidelity}. In the supplementary material, we show a model with limited expressive capacity, can also produce spurious zero-fidelity nodes in space-time which come in pairs with opposite topological charges. 

We also emphasize that the location of $x_*$ is model-dependent and not determined by the physics of the Hamiltonian family in question. Furthermore, the obstruction is stable under any deformation of the Hamiltonian's parameter family that preserves the gap and the ground-space rank, since the spectral projectors remain in the same bundle-homotopy class.

In all cases outlined in this work, operator-valued models are unobstructed, since they contain a continuous exact target, which wave-function valued classes do not (Theorem 1). Indeed, the two indices of $\hat A$ must be generated jointly, as in the multilinear sector of Ref.~\cite{HamiltonZero2026} for example. This is why the obstruction is \emph{topological}, rather than a question of expressivity or sampling. Since ground states are pure states by definition, this means operator-valued models must approach  approximate purity when they are successful over a Hamiltonian family, but must remain operator-valued should they be able to avoid this obstruction. We note that this does not guarantee an operator-valued model can automatically represent that target class, and the usual expressivity considerations for representability in deep learning apply here \cite{heightman2025deep}.

Finally, we note that $c_r\neq0$ is sufficient but not necessary for every section to vanish, since torsion or higher obstructions can also forbid a nowhere-zero section even when the top Chern class is zero~\cite{MilnorStasheff1974}. Indeed, bundle non-triviality alone is insufficient in rank $r>1$~\cite{MilnorStasheff1974} to identify a general topological obstruction for nodal surfaces in these Hamiltonian families. Future work therefore involves finding the necessary and sufficient criteria, as well as studies on the correlation between expressive capacity and charge-neutral pairs in the nodal surfaces of dynamics (see Supplementary Material) and the development of operator-valued foundation models for ground state problems and time evolution.

\paragraph{Data availability.---}
The code, trained weights, run logs, and dense-grid outputs used to produce all numerical figures are included with this work.

\bibliography{references}

\clearpage
\input{end_matter}
\clearpage
\input{supp_mat}

\end{document}

%% file: end_matter.tex
\section*{End Matter}

\emph{Riesz continuity.---}
Fix $x_0$ on a finite-dimensional constant-rank gapped stratum.  A positively oriented contour $\Gamma$ separates the ground state(s) from the remaining spectrum in a neighborhood $U$ of $x_0$, and
\begin{equation}
 \hat P_0(x)=\frac{1}{2\pi i}\oint_\Gamma[z\hat I-\hat H(x)]^{-1}\,dz,
 \qquad x\in U.
 \label{em:riesz}
\end{equation}
The resolvent identity and continuity of $\hat H$ imply norm continuity of $\hat P_0$ on $U$. Uniqueness then glues these local projectors.

\emph{Pure state neural realization in Fig.~\ref{fig:static}.---}
The FNQS was a $3$--$48$--$48$--$3$ multilayer perceptron with $\tanh$ activations, receiving the Hamiltonian parameters $(n_x,n_y,n_z)$, and returned three real angles $(a,b,c)$ which we map to a pure-state's $S^3$ coordinates, 
\begin{equation}
    |\psi\rangle = \begin{pmatrix}
        \cos a + i \sin a \cos b \\
        \sin a \sin b e^{ic}
    \end{pmatrix}.
\end{equation}
We minimized the exact mean energy by full-batch Adam for $5000$ steps on a deterministic equal-area grid of $96\times192=18\,432$ points, with a cosine learning-rate decay from $3\times10^{-3}$ to $3\times10^{-5}$. Because this system is small, we can evaluate amplitudes and energies exactly, with the implication that the obstruction is cannot be some sampling artifact. The plotted field was evaluated independently on a $241\times481$ $(\theta,\phi)$ grid.

\emph{Operator-valued neural realization in Fig.~\ref{fig:density-lift}.---}
For the operator-valued foundation model, the exact same model as above was used to construct a real vector $\bm v_\theta(\bm n)$ like in Fig.~\ref{fig:static}. This time however, we set 
$\bm r_\theta=\bm v_\theta/\sqrt{|\bm v_\theta|^2}$ 
and $\hat\rho_\theta=(\hat I+\bm r_\theta\cdot\hat{\bm\sigma})/2$, which defines an operator-valued map. The two models therefore each contain exactly $2691$ trainable real parameters, with the only difference being the output representation. After $5000$ steps, an independent $241\times481$ grid gave a minimum fidelity of $99.99985\%$, zero trace and Hermiticity error to machine precision, and a nonnegative spectrum. Continuous refinement from the worst grid points gave a largest refined infidelity $1.5306\times10^{-6}$ and energy error $3.0613\times10^{-6}$.

\begin{figure*}[t]
\centering
\includegraphics[width=\textwidth]{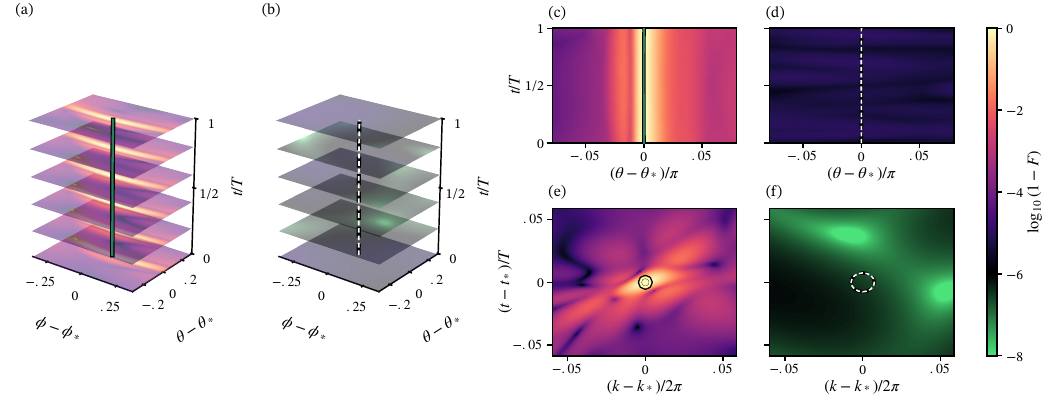}
\caption{
Pure-state FNQS world-lines compared to an operator-valued FNQS, for time evolution of a uniform quench in the main text on a shared heat-map showing $\log_{10}(1-F)$ clipped below $10^{-8}$. Each pair has an identical architecture, optimizer and training schedule, and only the output map changes.(a) Six $(\theta,\phi)$ log-infidelity slices of the finite-time quench for the pure-state model, the green curve is the refined charge $-1$ worldline. (b) A density-matrix-valued FNQS on the same slices, with the absent worldline dashed for reference showing no node forms. The maximum over the full $41\times361\times721$ space-time grid is $1-F=2.000\times10^{-5}$. (c,d) The same comparison profiled over azimuth, $\max_\phi\log_{10}(1-F)$.  (e) The periodic $C=+1$ pure-state FNQS model showing the circled charge $-1$ node, where $1-F=0.9999942$. (f) The density-matrix FNQS model on the same window, with the absent node position dashed. We see in (e) the node's open boundary spanning an area showing it has non-zero measure on the Hamiltonian space.
}
\label{fig:density-lift-dynamics}
\end{figure*}

\paragraph{Dynamical Obstructions.---}
We now consider the time evolution from an initial pure state. Let $\hat Q_0(x)$ be its continuous rank-one projector over the Hamiltonian family $x\in X$, written locally as $\hat Q_0=|\psi_0\rangle\langle\psi_0|$. For $I=[0,T]$, with $\hat U_x(t)=e^{-it\hat H(x)}$, local representatives satisfy $|\psi_t(x)\rangle=\hat U_x(t)|\psi_0(x)\rangle$ with the projector
to the true time-evolved state itself evolving as
\begin{equation}
\hat Q(x,t)=\hat U_x(t)\hat Q_0(x)\hat U_x^\dagger(t).
 \label{eq:true-dynamics}
\end{equation}
Introducing an open time interval introduces no new topology because it is contractible, thus the obstructions above carry over
into this dynamical setting.

\begin{theorem}
\label{thm:nodes}
Let $x\in X$ with the open time-interval $t\in[0,T]$ define the unitary evolution as
in Eq.~\ref{eq:true-dynamics} such that $L_0=\Ran\hat Q_0\to X$ is topologically nontrivial. 
Then every continuous normalized vector model has some $x_t \in X$ at every time $t$ such that
\[
\hat Q(x_t,t)|\Phi(x_t,t)\rangle=0.
\]
Thus the model's zero-fidelity set intersects every time slice $X\times\{t\}$.
\end{theorem}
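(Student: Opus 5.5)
The plan is to reduce each time slice to Theorem~\ref{thm:line}. Fix $t\in[0,T]$ and regard the slice $x\mapsto\hat Q(x,t)$ as a continuous family of rank-one projectors over $X$. We then show that its range bundle $L_t=\Ran\hat Q(\cdot,t)$ is isomorphic to $L_0$, and hence nontrivial. The argument of Theorem~\ref{thm:line} does not actually use the Hamiltonian at all: it only uses a continuous rank-one projector field with nontrivial range bundle. So we may rerun it with $\hat P_0$ replaced by $\hat Q(\cdot,t)$, applied to the restriction $x\mapsto|\Phi(x,t)\rangle$, which is continuous in $x$ for fixed $t$.

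The proof would proceed in three steps.

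\emph{Continuity of the slice.} Since $\hat H(x)$ is continuous on a finite-dimensional space, $\hat U_x(t)=e^{-it\hat H(x)}$ is jointly continuous in $(x,t)$. Together with continuity of $\hat Q_0$, this makes $\hat Q(x,t)$ in Eq.~\eqref{eq:true-dynamics} a jointly continuous rank-one projector field. The product is globally defined even though $|\psi_0\rangle$ is only local, because conjugation is phase-independent.

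\emph{Bundle isomorphism.} Define $\hat U_x(t):\Ran\hat Q_0(x)\to\Ran\hat Q(x,t)$. This is a fibrewise linear isomorphism, it is continuous in $x$, and its inverse $\hat U_x(t)^\dagger$ is also continuous. It is therefore a bundle isomorphism $L_0\cong L_t$, so $L_t$ is nontrivial. Alternatively, the projectors $\hat Q(\cdot,s)$ for $s\in[0,t]$ give a homotopy of classifying maps, which yields the same conclusion by homotopy invariance. The explicit isomorphism is cleaner, and I will use it.

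\emph{Apply the section argument.} Suppose that $\langle\Phi(x,t)|\hat Q(x,t)|\Phi(x,t)\rangle>0$ for all $x$. Then $\hat Q(x,t)|\Phi(x,t)\rangle$, normalized, is a nowhere-vanishing continuous section of $L_t$. Pulling it back by $\hat U_x(t)^\dagger$ gives a nowhere-vanishing section of $L_0$, which contradicts nontriviality. Hence there is some $x_t$ with $\hat Q(x_t,t)|\Phi(x_t,t)\rangle=0$. Since $t$ was arbitrary, the zero set meets every slice $X\times\{t\}$.

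I expect no serious obstacle. The only delicate point is making sure the isomorphism $L_0\cong L_t$ is genuinely global and continuous rather than just fibrewise. This holds because $\hat U_x(t)$ is defined globally on all of $\cH$, independent of any local gauge choice for $|\psi_0\rangle$. I would also remark that no gap assumption is needed here: rank-one-ness of $\hat Q$ is preserved automatically by unitarity.
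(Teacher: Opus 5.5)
Your proposal is correct and follows essentially the same route as the paper: on each slice the unitary $\hat U_x(t)$ gives a bundle isomorphism $L_0\cong L_t$, and the section argument of Theorem~\ref{thm:line} then forces a zero of $\hat Q(\cdot,t)|\Phi(\cdot,t)\rangle$; that argument needs only a continuous rank-one projector field, not the Hamiltonian or its gap. Your remarks on the joint continuity of $\hat U_x(t)$, the gauge independence of $\hat Q$, and the fact that the gap plays no role in the zero-fidelity conclusion make explicit points that the paper leaves implicit.
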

\emph{Proof.} For every $t\in[0,T]$, $\hat U_x(t)$ maps $\Ran\hat Q_0(x)$ unitarily and continuously onto $\Ran\hat Q(x,t)$. Hence the time-evolved line bundle is isomorphic to $L_0$, and is therefore topologically nontrivial, at every time. Applying Theorem~\ref{thm:line} separately on each time slice gives
\(
\hat Q(x_t,t)|\Phi(x_t,t)\rangle=0.
\)
$\square$

Thus at every time $t$, a continuous vector-valued FNQS has zero fidelity with the true state at some Hamiltonian parameter $x_t$. 
As an example, we can consider again the one-qubit family $\hat H_0(\bm n)=-\bm n\cdot\hat{\bm\sigma}$ with $\bm n\in S^2$. At the initial-time boundary we have the projector $\hat Q(\bm n,0)=\hat P_0(\bm n)=[\hat I+\bm n\cdot\hat{\bm\sigma}]/2$. Following a uniform quench to $\hat H_{\mathrm d}=\omega\hat\sigma_z/2$, its exact evolution reads 
\begin{equation}
    \hat Q(\bm n,t)=\hat U(t)\hat P_0(\bm n)\hat U^\dagger(t)=[\hat I+(R_z(\omega t)\bm n)\cdot\hat{\bm\sigma}]/2.
\end{equation}  
The target state therefore forms a line bundle over the cylinder $S^2\times[0,T]$. Every time slice is just a rotation of the Hopf bundle and retains $|c_1|=1$, so a transverse zero-fidelity point of an FNQS model becomes a charged worldline crossing every slice, as shown in Figs.~\ref{fig:density-lift-dynamics}(a) and~\ref{fig:density-lift-dynamics}(c).

The finite-time quench leaves the initial and final time boundaries distinct. However, we can also ask whether such an obstruction persists in a periodic setting, where the time interval is closed into a loop and thus the topology changes once more since loops are not always contractible.

For example, a one-dimensional crystal with momentum $k\in S_k^1$, and the identification $t=0\sim T$ gives the space-time of a torus $S_k^1\times S_t^1$. This space can be topologically nontrivial even though every fixed-time slice is trivial, which motivates the following obstruction in periodically driven systems.

\begin{theorem}
\label{thm:floquet}
Let a periodic true-state line bundle $L\to X\times S_t^1$ have $c_1(L)\neq0$.  No normalized representative of the true state is both global in $x$ and periodic in $t$, and every continuous periodic vector model has a space--time fidelity node, i.e. a point in space-time with zero fidelity to the true trajectory.
\end{theorem}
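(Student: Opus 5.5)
The plan is to argue by contradiction, exactly as in the proof of Theorem~\ref{thm:line}, but on the closed space-time base $Y=X\times S_t^1$. Write $\hat Q(x,t)$ for the continuous rank-one projector onto the true state. Periodicity means $\hat Q(x,0)=\hat Q(x,T)$, so $\hat Q$ descends to a continuous projector-valued map on $Y$, and $L=\Ran\hat Q\to Y$ is the line bundle in the hypothesis. I will first record this continuity. It follows from the continuity of $\hat U_x(t)$ and $\hat Q_0(x)$ via Eq.~\eqref{eq:true-dynamics}, or from the Riesz formula of Eq.~\eqref{em:riesz} when $\hat Q$ is a spectral projector of a Floquet or Bloch Hamiltonian.

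For the first claim, suppose a normalized representative $|\psi(x,t)\rangle\in\Ran\hat Q(x,t)$ existed that is continuous in $x$ over all of $X$ and periodic in $t$. Then it would be a continuous nowhere-vanishing section of $L$ over $Y$. A line bundle with such a section is trivial, via the isomorphism $Y\times\mathbb C\to L$, $(y,z)\mapsto z|\psi(y)\rangle$, so $c_1(L)=0$. This contradicts $c_1(L)\neq0$.

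For the second claim, let $|\Phi(x,t)\rangle$ be any continuous normalized model that is periodic in $t$, i.e.\ continuous on $Y$. Define the fidelity $F(x,t)=\langle\Phi|\hat Q|\Phi\rangle$, which is continuous on $Y$. If $F>0$ everywhere, then $\hat Q|\Phi\rangle/\sqrt{F}$ is a continuous normalized representative that is global in $x$ and periodic in $t$. The first claim excludes this, so $F(x_*,t_*)=0$ at some point, which is the space-time node.

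The main subtlety is making sure the periodic identification really produces a bundle over $Y$. The true state at $t=T$ need only coincide with the state at $t=0$ as a ray, and the phase mismatch there (for instance a dynamical or Berry phase) is precisely what can give $c_1\neq0$. I will therefore build everything from projectors, which are phase-free, and never from vectors. The model, by contrast, is assumed periodic as a vector, so it defines an honest section of the trivial bundle $Y\times\cH$, and its projection is a section of $L$. With that distinction made explicit, no further computation is needed: the result is Theorem~\ref{thm:line} applied with base space $X\times S^1_t$. It is worth remarking that on the example torus $S^1_k\times S^1_t$ every fixed-time slice is trivial, which is why Theorem~\ref{thm:nodes} cannot be used slice by slice and the global base is essential.
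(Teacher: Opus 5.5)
Your proposal is correct and matches the paper's proof. In both, a global periodic representative, or the normalized projection $\hat Q|\Phi\rangle/\sqrt{F}$ of a periodic model with everywhere-positive fidelity, would be a nowhere-vanishing section of $L$ over $X\times S^1_t$, which would trivialize $L$ and contradict $c_1(L)\neq0$; your point about building $L$ from the phase-free projectors, so the periodic identification is well defined, is a useful clarification the paper leaves implicit.
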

\emph{Proof.} A representative that is global in $x$ and periodic in $t$ would be a nowhere-vanishing section of $L$, and a continuous periodic model with everywhere-positive fidelity would normalize $\hat Q|\Phi\rangle$ into one; either would trivialize $L$, as in Theorem~\ref{thm:line}. $\square$

This is a separate obstruction, which we can see in the following example. If we cut the torus of the above example at $t=0$, a true-state vector can be chosen on this cylinder, but its endpoints obey
$|\psi(k,T)\rangle=g(k)|\psi(k,0)\rangle$. In a parallel-transported gauge, $g(k)$ is the $U(1)$ loop around the drive cycle, and a periodic gauge exists only when its winding vanishes. On $T^2$, the winding number satisfies
\begin{equation}
 C=\frac{1}{2\pi i}\int_{T^2}\Tr(\hat Q\,d\hat Q\wedge d\hat Q)
   =\frac{1}{2\pi i}\oint_{S_k^1}g^{-1}dg\in\mathbb Z.
 \label{eq:chern}
\end{equation}
This is the familiar topology of a Thouless pump~\cite{Thouless1983}, in which a nonzero winding in $k$ forces a zero-fidelity node in any continuous periodic vector model. 
Figure~\ref{fig:density-lift-dynamics}(e)-(f) of the End Matter shows this winding across the Brillouin zone in a one-qubit example for a wavefunction- and operator-valued model respectively. 
We see for both periodic and open time boundaries, continuity of the fidelity function extends the failure over an open region of space--time, and an operator-valued model is again unobstructed by Theorem~\ref{thm:lift}.

\emph{Neural realizations in Fig.~\ref{fig:density-lift}.---}
Both dynamical models were $4$--$64$--$64$--$64$--$3$ multilayer perceptrons with $\tanh$ hidden activations and three hyperspherical outputs defining a normalized one-qubit vector in $S^3$, optimized as above. The finite-time network used $(n_x,n_y,n_z,2t/T-1)$ as input, and its $28\times56$ equal-area sphere at $13$ times contained $20\,384$ training points. Figures~\ref{fig:density-lift}(a) and~\ref{fig:density-lift}(c) evaluated on $41\times361\times721$ $(t,\theta,\phi)$ points and a $161\times361$ profiled grid.

The periodic network used $(\cos k,\sin k,\cos t,\sin t)$, and was trained on a $112\times112$ torus for $\bm d(k,t)=(\sin k,\sin t,-1+\cos k+\cos t)$. Fig.~\ref{fig:density-lift}(d-e) uses an independent $641\times641$ evaluation. 

%% file: supp_mat.tex
\clearpage
\setcounter{figure}{0}
\renewcommand{\thefigure}{S\arabic{figure}}
\setcounter{equation}{0}
\renewcommand{\theequation}{S\arabic{equation}}

\onecolumngrid

\section*{Supplemental Material}

In the Supplemental Material we start by showing the appearance of complimentary-charge node-pairs on a model with limited expressivity. We then show numerically that sampling cannot remove the topological obstruction, meaning this obstruction persists in regimes where only sampling may be available. Indeed, no amount of training data or model capacity can remove the existence of $x_*$. However a more expressive model that can vary sharply with the parameter space $X$ can shrink the size of the neighbourhood around $x_*$, as remarked in the main text.


\emph{Inexpressive models create neutral node pairs.---}

Let $B$ be an oriented smooth $d$-manifold, $E\to B$ a complex rank-$r$ bundle, and $s$ transverse to its zero section.  The preimage theorem makes $Z=s^{-1}(0)$ an oriented submanifold of real codimension $2r$, and its normal bundle is canonically $E|_Z$. The Thom construction~\cite{MilnorStasheff1974} identifies
$[Z]^\vee=e(E_{\mathbb R})=c_r(E).$

For compact $B$ without boundary this is an absolute class. If $B$ has boundary and $s$ is transverse there, $[Z,\partial Z]\in H_{d-2r}(B,\partial B)$ is Poincar\'e--Lefschetz dual to $c_r(E)\in H^{2r}(B)$. A node that reaches a boundary carries charge out, and an interior non-transverse creation event creates total signed charge zero, so it cannot change $c_r$ while the gap and bundle persist. In our case, this means the total signed node charge on any time-slice is $-c_1=-1$, and and interior creation has zero net charge. Thus any nodes beyond the canonical one appear in $\pm1$ pairs and the count is always odd. We note that if transversality fails, $Z$ need not be a manifold, although its Euler-class representative persists after a small perturbation \cite{MilnorStasheff1974}. 

This also locates the limit of Theorem~\ref{thm:rankr}, where $c_r\neq0$ is sufficient but not necessary for every section to vanish. This is because torsion or higher obstructions can forbid a nowhere-zero section even when the top Chern class is zero, and bundle non-triviality alone is insufficient in rank $r>1$~\cite{MilnorStasheff1974} to identify a general topological obstruction for nodal surfaces in these Hamiltonian families.

By restricting expressivity or optimisation, we can see this numericaly. In Fig.~\ref{fig:supp-pairs}, a width-$8$ model of  carries up to two extra pairs, with total charge $-1$ on all $241$ time slices of the same dynamics as the one-qubit quench dynamics we have followed in this work. Meanwhile the width-$64$ model of Fig.~\ref{fig:density-lift} carries exactly one node per slice. Every extra node is a fidelity zero and therefore, by Eq.~\eqref{eq:supp-bound}, a separate spurious susceptibility singularity at a capacity- and seed-dependent location. Added capacity and training can annihilate the pairs but can never remove the last node, with only an operator-valued map removing this node.

\begin{figure}[h]
\centering
\includegraphics[width=\textwidth]{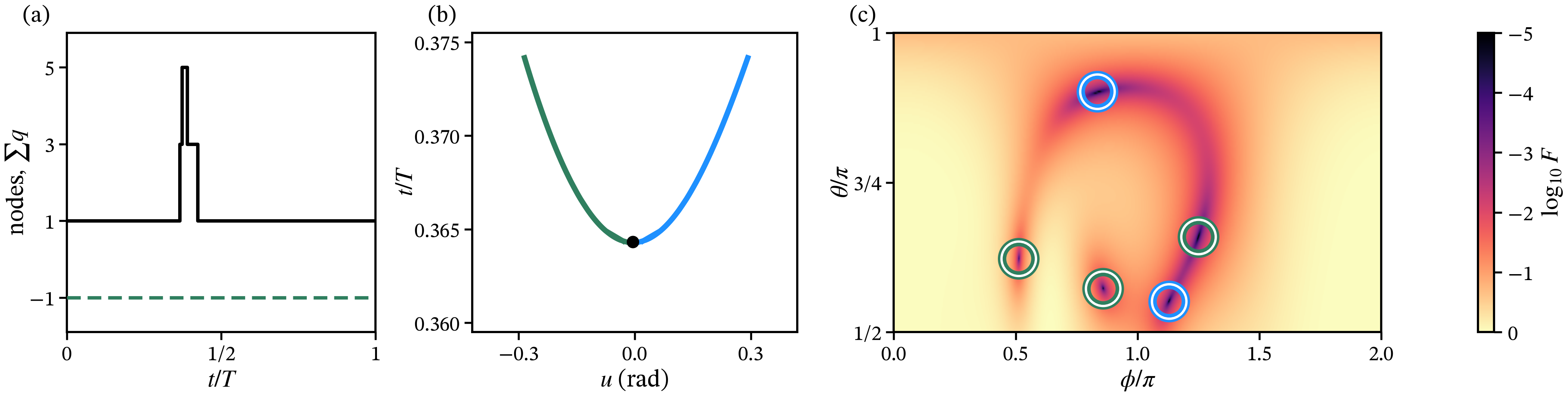}
\caption{\label{fig:supp-pairs} Capacity throttled dynamical models can have pairs of nodes whose complimenary topological charge makes them neutral. Here, the finite-time protocol of Fig.~\ref{fig:density-lift}(a,b) is retrained with a single width-$8$ hidden layer ($67$ parameters, $1500$ steps). (a) Refined node count per time slice (solid) and total charge (dashed). The count passes $1\to3\to5\to3\to1$ through neutral creation and annihilation events while $\sum q=-1$ at every one of the $241$ slices. (b) The first birth at $t/T=0.364$, with refined worldlines of the $-1$ (green) and $+1$ (orange) branches in a local coordinate $u$ centered on the event (dot). (c) $\log_{10}F$ on the five-node slice $t/T=0.383$, clipped below $10^{-5}$ so that the node cores are resolved against the $O(1)$-infidelity background; rings mark the refined nodes, colored by charge (green $-1$, blue $+1$), with charges $(-1,-1,-1,+1,+1)$.}
\end{figure}

\emph{Sampling does not lift the obstruction.---}
Theorem~\ref{thm:line} constrains the model class and not the estimator, so the zero-fidelity node of Fig.~\ref{fig:static} survives when the energy and its gradient are estimated by Monte Carlo sampling instead of being evaluated exactly. In this section, we demonstrate this numerically.

If we sample configurations $s\in\{0,1\}$ by a single-flip Metropolis-Hastings algorithm from $p_{\bm n}(s)=|\psi_{\bm n}(s)|^2$ for the wavefunction-valued model, or from the diagonal $p_{\bm n}(s)=\langle s|\hat\rho_{\bm n}|s\rangle$ for the density-matrix model, the training loss at each Hamiltonian $\bm n$ of the grid becomes a sample mean of the local energy.  We run $16$ independent chains per Hamiltonian from random initial configurations, discard $16$ burn-in sweeps, and keep $64$ sweeps, giving $N=1024$ samples per Hamiltonian per step, with the architecture and training details unshcanged from the End Matter.
With $E_{\rm loc}(s)=\sum_{s'}H_{ss'}\psi(s')/\psi(s)$ the wavefunction energy estimate is $\bar E=N^{-1}\sum_i E_{\rm loc}(s_i)$ and its gradient is the standard estimator \cite{heightman2025deep},
\begin{equation}
 \partial_\theta E\simeq\frac{2}{N}\,\mathrm{Re}\sum_{i}\big[E_{\rm loc}(s_i)-\bar E\big]^{*}\,\partial_\theta\log\psi(s_i).
 \label{eq:supp-vmc}
\end{equation}
For the density matrix, $\mathrm{Tr}(\hat\rho\hat H)=\sum_s p(s)E_{\rm loc}(s)$ with $E_{\rm loc}(s)=\sum_{s'}\rho_{ss'}H_{s's}/\rho_{ss}$, and we get,
\begin{equation}
 \partial_\theta E\simeq\frac{1}{N}\sum_{i}\Big\{\big[E_{\rm loc}(s_i)-\bar E\big]\,\partial_\theta\log p(s_i)+\partial_\theta E_{\rm loc}(s_i)\Big\}.
 \label{eq:supp-dm-mc}
\end{equation}
Figure~\ref{fig:supp-sampled} shows the sampled wavefunction model with the same forced node as Fig.~\ref{fig:static}, and Fig.~\ref{fig:supp-sampled-lift} shows that the sampled density-matrix model reaches the same $10^{-6}$ infidelity as its exactly trained counterpart of Fig.~\ref{fig:density-lift}.

\begin{figure}[h]
\centering
\includegraphics[width=0.90\textwidth]{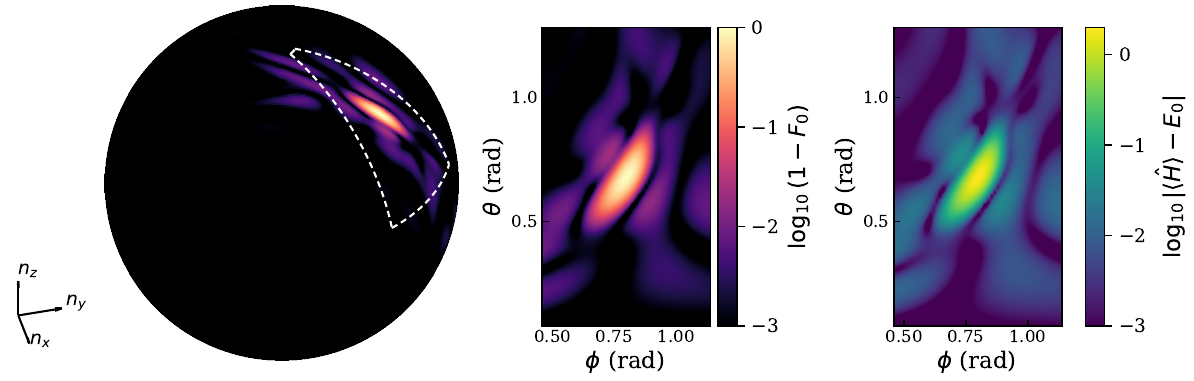}
\caption{\label{fig:supp-sampled}The obstruction of Fig.~\ref{fig:static} under Monte Carlo training, with the same architecture and training as  Fig.~\ref{fig:static}. Here the energy and gradient are estimated from $N=1024$ Metropolis samples per Hamiltonian per step, see Eq.~\eqref{eq:supp-vmc}.  All quantities shown are evaluated exactly on the same independent $241\times481$ grid. The refined maximum sits at $(\theta,\phi)=(0.682,0.799)$ with $F_0=0$ and energy error $2$, matching the exact result of Fig.~\ref{fig:static}.}
\end{figure}

\begin{figure}[h]
\centering
\includegraphics[width=0.78\textwidth]{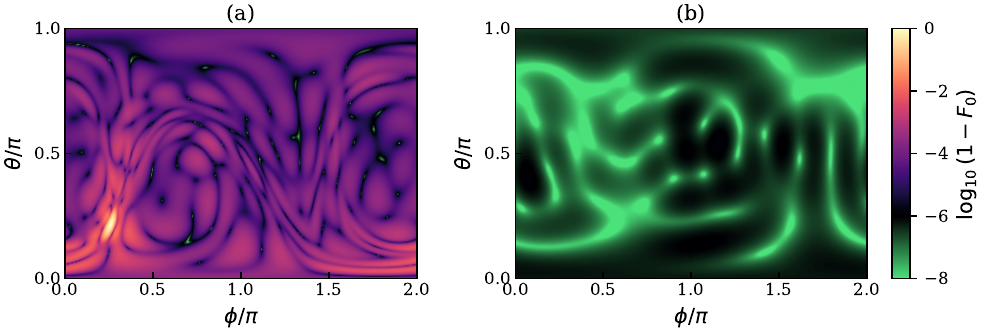}
\caption{\label{fig:supp-sampled-lift} Matched density-matrix lift under Monte Carlo training, as in Fig.~\ref{fig:density-lift}.  (a) The sampled wavefunction-valued model of Fig.~\ref{fig:supp-sampled}, with a maximum $1-F_0=1$. (b) The matched density-matrix model trained with Eq.~\eqref{eq:supp-dm-mc} from the same $N=1024$ samples per Hamiltonian per step, with a maximum $1-F_0=1.44\times10^{-6}$.}
\end{figure}